\newcommand{\N}{\mathbb{N}}
\newcommand{\Z}{\mathbb{Z}}
\newcommand{\genF}{\tilde{\omega}(\mathcal{F})}
\newcommand{\genFM}{\tilde{\omega}(\mathcal{F}_M)}
\theoremstyle{plain} \newtheorem{theorem}{Theorem}[section]
\newtheorem{example}[theorem]{Example}
\newtheorem{proposition}[theorem]{Proposition}
\newtheorem{lemma}[theorem]{Lemma}
\newtheorem{claim}[theorem]{Claim}
\newcommand{\bprf}[1][Proof:]{\begin{list}{}    {\setlength{\leftmargin}{0.5em} \setlength{\rightmargin}{0em}  \setlength{\listparindent}{1em}}   \item {\em \hspace{-1em}  #1  }}
\newcommand{\eprf}{\end{list}}
\newcommand{\bclaimproof}{\bprf}
\theoremstyle{definition} 
\newtheorem{definition}{Definition}[section]
\newcommand{\A}{\mathcal{A}}
\newcommand{\Si}{\Sigma}
\newcommand{\SZ}{\Sigma^{\Z}}
\newcommand{\sti}{\vbox to 7pt{\hbox{\includegraphics{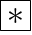}}}\hspace{.1em}}
\newcommand{\sts}{\vbox to 7pt{\hbox{\includegraphics{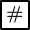}}}\hspace{.1em}}
\newcommand{\stsp}{\vbox to 7pt{\hbox{\includegraphics{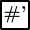}}}\hspace{.1em}}
\title{Rice's theorem for generic limit sets of cellular automata}
\author{Martin Delacourt\\Université d'Orléans, LIFO EA4022, FR-45067 Orléans, France\\martin.delacourt@univ-orleans.fr}
\begin{document}

\maketitle
\begin{abstract}
  The generic limit set of a cellular automaton is a topologically defined set of configurations that intends to capture the asymptotic behaviours while avoiding atypical ones. It was defined by Milnor then studied by Djenaoui and Guillon first, and by  T{\"o}rm{\"a} later. They gave properties of this set related to the dynamics of the cellular automaton, and the maximal complexity of its language. In this paper, we prove that every non trivial property of these generic limit sets of cellular automata is undecidable.
\end{abstract}

\section{Introduction}

Cellular automata (CA) are discrete dynamical systems defined by a local rule, introduced in the 40s by John von Neumann \cite{Neumann}. Given a finite alphabet $\A$, the global rule  on $\A^{\Z}$ is given by the synchronous application of the local one at every coordinate. They can be seen as models of computation, dynamical systems or many phenomena from different fields, providing links between all of these \cite{Hedlund-1969,Kari-survey}.

The asymptotic behaviour of CA has been studied a lot, mainly using the definition of limit set: the set of points that can be observed arbitrarily far in time. In particular concerning the complexity of this set: it can be non-recursive, the nilpotency problem is undecidable and there is Rice's theorem on properties of the limit set of CA \cite{cpy-1989,Kari-1992,Kari-1994}. Rice's theorem states that every nontrivial property of the limit set of CA is undecidable. Other definitions were introduced in order to restrain to typical asymptotic behaviour. Milnor proposed the definition of likely limit set and generic limit set in \cite{mil-1985} in the more general context of dynamical systems. While the likely limit set is defined in the measure-theoretical world, the generic limit set is a topological variant. Djenaoui and Guillon proved in \cite{djegui-2018} that both are equal for full-support $\sigma$-ergodic measures in the case of CA.

The generic limit set is the smallest closed subset of the fullshift $\SZ$ containing all limit points of all configurations taken in a comeager subset of $\SZ$. Djenaoui and Guillon studied the generic limit set in \cite{djegui-2018}, proving results on the structure of generic limit sets related to the directional dynamics of CA. They also provide a combinatorial characterization of the language of the generic limit sets and examples of CA with different limit, generic limit and $\mu$-limit sets. The latter was introduced in \cite{km-2000} by K\r{u}rka and Maass as another measure-theoretical version of limit set.

The $\mu$-limit set is determined by its language which is the set of words that do not disappear in time, relatively to the measure $\mu$. Amongst the results on the $\mu$-limit set, it was proved in \cite{bdpst} that the complexity of the language is at the level $3$ of the arithmetical hierarchy ($\Sigma_3^0$), with a complete example, it was also proved that the nilpotency problem is $\Pi_3^0$-complete. Rice's theorem also holds stating that each nontrivial property has at least $\Pi_3^0$ complexity. A slightly different approach led Hellouin and Sablik to similar results on the limit probability measure in \cite{Hellouin-Sablik-2013}.

In \cite{tor-2020}, T{\"o}rm{\"a} proved computational complexity results on the generic limit sets, in particular an example of a CA with a $\Sigma_3^0$-complete generic limit set, and constraints on the complexity when the dynamics of the CA is too simple on the generic limit set.

In this paper, we prove  Rice's theorem on generic limit sets combining ideas from \cite{Kari-1994} and \cite{bdpst}.

\section{Definitions}
In this paper, we consider the countable set $\mathcal{Q}=\{q_0,q_1,q_2,\dots\}$. Every finite alphabet will be a finite subset of $\mathcal{Q}$. Given a finite alphabet $\Sigma\subseteq \mathcal{Q}$ and a radius $r\in\N$, a local rule is a map $\delta:\Sigma^{2r+1}\to \Sigma$ and a \emph{cellular automaton} $\mathcal{F}:\Sigma^{\Z}\to \Sigma^{\Z}$ is the global function associated with some local rule $\delta$: for every $c\in \Sigma^{\Z}$ and every $i\in\Z$, $\mathcal{F}(c)_i=\delta(c_{i-r},c_{i-r+1},\dots,c_{i+r})$. We call \emph{configurations} the elements of $\Sigma^{\Z}$. The orbit of an initial configuration $c$ under $\mathcal{F}$ is called a \emph{space-time diagram}. Time goes upward in the illustrations of this paper.

Define the Cantor topology on $\Sigma^{\Z}$ using the distance $d(c,c')=\frac{1}{2^i}$ where $i=\min\{j\in\N, c_j\neq c'_j\textrm{ or }c_{-j}\neq c'_{-j}\}$. For any word $w\in \Sigma^*$, denote $|w|$ the length of $w$ and  $[w]_i=\{c\in \Sigma^{\Z}:\forall k< |w|,c_{i+k}=w_k\}$ the associated \emph{cylinder set}, which is a clopen set.

Denote $\sigma$ the shift on $\Sigma^{\Z}$, which is the CA such that $\forall c\in \Sigma^{\Z}, \forall i\in\Z, \sigma(c)_i=c_{i+1}$. A \emph{subshift} is a closed $\sigma$-invariant subset of $\Sigma^{\Z}$. A subshift can be equivalently defined by the set of forbidden words, in this case a subshift is the set of configurations that do not belong to any $[w]_i$ where $w$ is forbidden.

In this paper, a Turing machine works on a semi-infinite (to the right) tape, with a finite alphabet $\mathcal{A}$ containing a blank symbol $\bot$. It has one initial state $q_0$ and one final state $q_f$. At each step of the computation, the head of the machine reads the symbol at the position on the tape to which it points, and decides the new symbol that is written on the tape, the new state it enters, and its move (one cell at most). It can be simulated by a CA using states that can contain the head of the machine and the tape alphabet. We will here only simulate machines in a finite space in which there is only one head.

\subsection{Limit sets of cellular automata}

Different definitions of the asymptotic behavior of a CA have been given. The most classical one is the \emph{limit set} $\Omega_{\mathcal{F}}=\bigcap_{t\in\N}\mathcal{F}^t(\Sigma^{\Z})$ of a CA $\mathcal{F}$, that is the set of configurations that can be seen arbitrarily late in time. For any subset $X\subseteq \Sigma^{\Z}$,  define $\omega(X)$ as the set of limit points of orbits of configurations in $X$: $c\in\omega(X)\Leftrightarrow \exists c'\in X, \liminf_{t\to\infty}d(\mathcal{F}^t(c'),c)=0$. The set  $\omega(\Sigma^{\Z})$ is called the \emph{asymptotic set} of $\mathcal{F}$.

A subset $X\subseteq \Sigma^{\Z}$ is said to be \emph{comeager} if it contains a countable intersection of dense open sets. It implies in particular that $X$ is dense (Baire property). 

For $X\subseteq \Sigma^{\Z}$, define the \emph{realm of attraction} $\mathcal{D}(X)=\{c\in\Sigma^{\Z}:\omega(c)\subseteq X\}$. The \emph{generic limit set} $\tilde{\omega}(\mathcal{F})$ of $\mathcal{F}$ is then defined as the intersection of all closed subsets of $\Sigma^{\Z}$ whose realms of attraction are comeager.

The following two examples show differences between all these sets, they were already presented in \cite{djegui-2018}.
\begin{example}[The Min CA]
  Consider the CA $\mathcal{F}$ of radius $1$ on alphabet $\{0,1\}$ whose local rule is $(x,y,z)\mapsto \min{(x,y,z)}$. The state $0$ is spreading, that is, every cell that sees this state will enter it too. A space-time diagram of the MIN CA is represented in Figure~\ref{fig:min}.

  We have:
  \begin{itemize}
  \item $\Omega_{\mathcal{F}}=\{c\in\{0,1\}^{\Z}:\forall i\in\Z,k\in\N^*, c\notin [10^k1]_i]\}$;
  \item $\genF=\{0^{\Z}\}$ and it is equal to the $\mu$-limit set for a large set of measures containing every non degenerate Markov measure.
  \end{itemize}

  \begin{figure}
  \begin{center}
    \includegraphics[width=\textwidth]{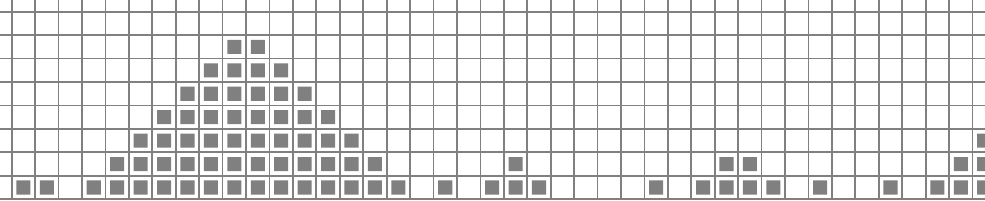}
    \caption{\label{fig:min}Some part of a space-time diagram of the Min CA, $0$ is represented by the white state and $1$ by the black state.}
    \end{center}
\end{figure}

\end{example}

\begin{example}[Gliders]
  Consider the CA $\mathcal{F}$ of radius $1$ on alphabet $\{0,>,<\}$. The states $<$ and $>$ are respectively speed $-1$ and $1$ signals over a background of $0$s. When a $<$ and a $>$ cross, they both disappear.  A space-time diagram of this CA is represented in Figure~\ref{fig:gliders}. For a complete description of the rule, see for example \cite[Example 3]{km-2000}.

  We have:
  \begin{itemize}
  \item $\Omega_{\mathcal{F}}=\{c\in\{0,<,>\}^{\Z}:\forall i\in\Z,k\in\N, c\notin [<0^k>]_i]\}$;
  \item $\genF=\Omega_{\mathcal{F}}$;
    \item the $\mu$-limit set depends here of $\mu$. With $\mu$ the uniform Bernoulli measure, it is $\{0^{\Z}\}$. If $\mu$ is Bernoulli with a bigger probability for $<$ than for $>$, then the $\mu$-limit set is  $\{\{0,<\}^{\Z}\}$.
  \end{itemize}

  \begin{figure}
  \begin{center}
    \includegraphics[width=.8\textwidth]{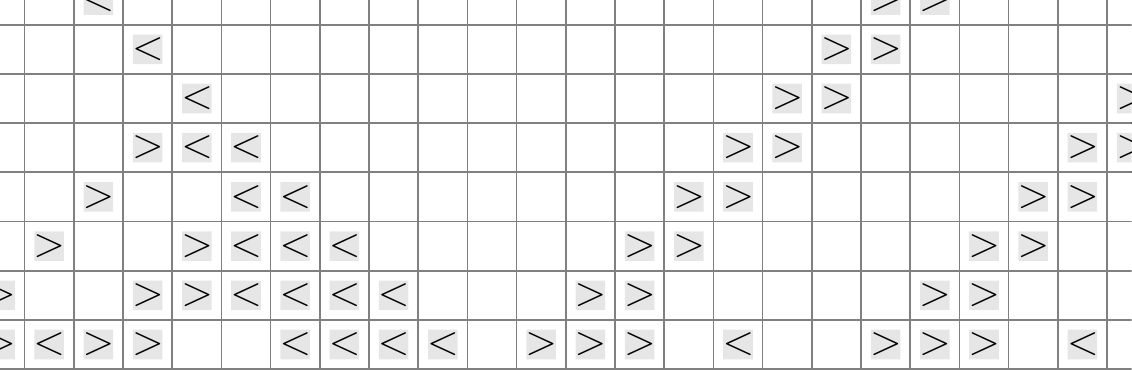}
    \caption{\label{fig:gliders} The $<$ and $>$ states of the Gliders CA are particles going in different directions and annihilating each other when they cross.}
    \end{center}
\end{figure}

\end{example}

\subsection{Preliminary properties of generic limit sets of CA}

Many properties of generic limit sets were proved either in \cite{mil-1985} or in \cite{djegui-2018} for the particular case of CA.
\begin{proposition}[Prop 4.2 of \cite{djegui-2018}]
  Given a CA $\mathcal{F}$, the realm of attraction of $\tilde{\omega}(\mathcal{F})$ is comeager.
\end{proposition}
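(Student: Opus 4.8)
The plan is to reduce the defining intersection to a countable one and then invoke the stability of comeagerness under countable intersections. Write $\mathcal{C}$ for the family of closed sets $X \subseteq \SZ$ with $\D(X)$ comeager, so that by definition $\genF = \bigcap_{X \in \mathcal{C}} X$. The first observation is a purely set-theoretic identity for realms of attraction: since $\omega(c) \subseteq \bigcap_{X} X$ holds exactly when $\omega(c) \subseteq X$ for every $X$, we get $\D\!\left(\bigcap_{X \in \mathcal{C}} X\right) = \bigcap_{X \in \mathcal{C}} \D(X)$. Thus $\D(\genF)$ is an intersection of comeager sets; the only difficulty is that this intersection is a priori over an uncountable index set, and uncountable intersections of comeager sets need not be comeager.

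To overcome this I would exploit that the Cantor space $\SZ$ is second countable, with the countable collection of cylinders $[w]_i$ serving as a base of clopen sets. Consider the open sets $X^c$ for $X \in \mathcal{C}$; their union is $\genF^c$. A second countable space is Lindelöf, so any family of open sets admits a countable subfamily with the same union: concretely, enumerate the basic cylinders that are contained in some $X^c$, and for each such cylinder select a single witnessing set $X$. This yields a countable subfamily $X_1, X_2, \ldots \in \mathcal{C}$ with $\bigcup_k X_k^c = \bigcup_{X \in \mathcal{C}} X^c$, equivalently $\bigcap_k X_k = \genF$ after taking complements.

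With the intersection made countable, I would combine the two previous steps: $\D(\genF) = \D\!\left(\bigcap_k X_k\right) = \bigcap_k \D(X_k)$. Each $\D(X_k)$ is comeager because $X_k \in \mathcal{C}$, and a countable intersection of comeager sets is again comeager, since each such set contains a countable intersection of dense open sets and a countable union of countable families is countable. Hence $\D(\genF)$ is comeager, as required. As a by-product this shows $\genF \in \mathcal{C}$, so the defining intersection is in fact attained and $\genF$ is genuinely the smallest closed set whose realm of attraction is comeager.

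I expect the middle step to be the only real obstacle — passing from the uncountable defining intersection to a countable one — and everything hinges on the second countability of $\SZ$; without it the statement could fail, precisely because arbitrary intersections of comeager sets need not be comeager. The remaining ingredients, namely the realm-of-attraction identity and the closure of comeagerness under countable intersection, are routine.
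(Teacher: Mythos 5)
The paper does not reproduce a proof of this statement; it is quoted as Proposition 4.2 of \cite{djegui-2018}. Your argument is correct and is the standard one for this fact: the identity $\D\bigl(\bigcap_X X\bigr)=\bigcap_X \D(X)$, the Lindel\"of reduction of the defining intersection to a countable subfamily via the countable base of cylinders, and the closure of comeagerness under countable intersections together give exactly what is needed (including the by-product that $\genF$ itself has comeager realm, i.e.\ the infimum is attained).
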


\begin{proposition}[Prop 4.4 of \cite{djegui-2018}]
  Given a CA $\mathcal{F}$, $\tilde{\omega}(\mathcal{F})$ is a subshift.
\end{proposition}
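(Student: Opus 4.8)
The plan is to verify the two defining properties of a subshift — closedness and $\sigma$-invariance — for $\genF$. Closedness is immediate, since by definition $\genF$ is an intersection of closed subsets of $\SZ$. All the work therefore lies in proving $\sigma$-invariance, and the engine driving it is the fact that every CA commutes with the shift: $\mathcal{F}\sigma=\sigma\mathcal{F}$.

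First I would establish the commutation relation $\omega(\sigma(c))=\sigma(\omega(c))$ for every configuration $c$. Iterating $\mathcal{F}\sigma=\sigma\mathcal{F}$ gives $\mathcal{F}^t\sigma=\sigma\mathcal{F}^t$, so the orbit of $\sigma(c)$ is the pointwise image under $\sigma$ of the orbit of $c$. Since $\sigma$ is a homeomorphism of $\SZ$, it carries the limit points of a sequence exactly onto the limit points of the image sequence, which yields the claimed identity (reading $\omega(c)$ as the set of $c''$ with $\liminf_t d(\mathcal{F}^t(c),c'')=0$).

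Next I would transfer this to realms of attraction. Combining $\omega(\sigma(c))=\sigma(\omega(c))$ with the equivalence $\sigma(\omega(c))\subseteq X \Leftrightarrow \omega(c)\subseteq\sigma^{-1}(X)$ gives $\D(\sigma(X))=\sigma(\D(X))$ for every $X\subseteq\SZ$. Because $\sigma$ is a homeomorphism it sends dense open sets to dense open sets and commutes with countable intersections, hence it preserves comeager sets. Consequently, if $X$ is closed with comeager realm of attraction, then $\sigma(X)$ is again closed (homeomorphic image of a closed set) and its realm of attraction $\D(\sigma(X))=\sigma(\D(X))$ is again comeager; the same reasoning applies to $\sigma^{-1}$.

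Finally I would conclude. Let $\C$ denote the family of closed sets whose realm of attraction is comeager; it is nonempty (it contains $\SZ$, whose realm of attraction is all of $\SZ$) and, by the previous step, stable under both $\sigma$ and $\sigma^{-1}$, so $\sigma$ acts as a bijection on $\C$. As $\sigma$ is injective it commutes with arbitrary intersections, whence
\[ \sigma(\genF)=\sigma\Big(\bigcap_{X\in\C}X\Big)=\bigcap_{X\in\C}\sigma(X)=\bigcap_{Y\in\C}Y=\genF, \]
and symmetrically for $\sigma^{-1}$, giving $\sigma(\genF)=\genF$. Together with closedness this shows $\genF$ is a subshift. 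The only delicate point is checking that $\sigma$ genuinely preserves comeagerness (and commutes with the intersection defining $\genF$); everything else is formal manipulation of the commutation relation.
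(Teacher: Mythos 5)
Your proof is correct. The paper itself gives no argument for this proposition --- it is imported by citation from Prop~4.4 of the Djenaoui--Guillon paper --- so there is nothing to compare against line by line, but your argument (closedness from the definition as an intersection of closed sets, plus $\sigma$-invariance via $\mathcal{F}\sigma=\sigma\mathcal{F}$, the identity $\mathcal{D}(\sigma(X))=\sigma(\mathcal{D}(X))$, preservation of comeagerness under the homeomorphism $\sigma$, and the resulting permutation of the defining family of closed sets) is the standard one and all steps check out.
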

Note that the limit set of a CA is also a subshift whereas the asymptotic limit set may not be.

\begin{proposition}[Cor 4.7 of \cite{djegui-2018}]
  Given a CA $\mathcal{F}$ on alphabet $\Sigma$, $\tilde{\omega}(\mathcal{F})=\Sigma^{\Z}\Leftrightarrow \mathcal{F}\textrm{ is surjective}$.
\end{proposition}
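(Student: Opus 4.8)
The plan is to prove the two implications separately, the converse being the easy one. First I would record the general inclusion $\genF \subseteq \Omega_{\mathcal{F}}$: since every limit point of every orbit lies in the limit set, $\D(\Omega_{\mathcal{F}}) = \SZ$, which is comeager, so $\Omega_{\mathcal{F}}$ is one of the closed sets over which the intersection defining $\genF$ is taken. Now if $\mathcal{F}$ is not surjective, its image $\mathcal{F}(\SZ)$ is a proper subshift, hence misses some word $w$; as $\Omega_{\mathcal{F}} \subseteq \mathcal{F}(\SZ) \subsetneq \SZ$, the cylinder $[w]_0$ meets neither $\Omega_{\mathcal{F}}$ nor $\genF$, so $\genF \neq \SZ$. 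This settles the direction $\genF = \SZ \Rightarrow \mathcal{F}$ surjective.

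For the forward direction I would first reformulate $\genF = \SZ$. By definition it fails exactly when some proper closed $F$ has comeager realm; any such $F$ avoids a cylinder $[w]_0$, and then $\D(F) \subseteq \{c : \omega(c) \cap [w]_0 = \emptyset\}$. Since $[w]_0$ is clopen, $\omega(c) \cap [w]_0 = \emptyset$ is equivalent to $\mathcal{F}^t(c) \in [w]_0$ for only finitely many $t$ (a convergent subsequence of iterates in the compact set $[w]_0$ would otherwise produce a limit point there). Writing $B_w = \{c : \mathcal{F}^t(c) \in [w]_0 \text{ for infinitely many } t\} = \bigcap_N \bigcup_{t \ge N} \mathcal{F}^{-t}([w]_0)$, we get $\D(F) \subseteq B_w^c$. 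Hence it suffices to prove that for every word $w$ the set $B_w$ is \emph{non-meager}: then no $B_w^c$ is comeager, so no proper closed set has comeager realm, and $\genF = \SZ$.

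The surjectivity hypothesis enters through the measure. A surjective CA preserves the uniform Bernoulli measure $\mu$, which has full support, so each pullback $\mathcal{F}^{-t}([w]_0)$ has measure $\mu([w]_0) > 0$; since $B_w = \limsup_t \mathcal{F}^{-t}([w]_0)$, continuity from above applied to the decreasing sets $\bigcup_{t \ge N}\mathcal{F}^{-t}([w]_0)$ gives $\mu(B_w) \ge \mu([w]_0) > 0$. The main obstacle is precisely the passage from here to non-meagerness: measure and Baire category do not control one another, and a positive-measure $G_\delta$ set can be meager, so the measure bound alone does not close the argument. To upgrade it I would argue topologically: $B_w$ is a $G_\delta$, so it is non-meager as soon as each open set $\bigcup_{t\ge N}\mathcal{F}^{-t}([w]_0)$ is dense in some fixed nonempty clopen box $W$, making $B_w$ comeager in $W$. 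The heart of the proof is thus to produce such a $W$ together with, inside every subcylinder of $W$, a configuration whose orbit re-enters $[w]_0$ at an arbitrarily prescribed late time, using surjectivity of $\mathcal{F}^t$ to pull $w$ back and glue the resulting finite pattern into the subcylinder. I expect this density/gluing step — reconciling the constraint imposed by the subcylinder near the origin with the constraint of realizing $w$ at the origin much later — to be the delicate point, as the identity CA (where $B_w = [w]_0$ is non-meager but nowhere near dense) already shows that $W$ cannot be taken to be the whole space and must be chosen with care.
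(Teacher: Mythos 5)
The paper does not actually prove this proposition; it imports it as Corollary 4.7 of \cite{djegui-2018}, so there is no internal proof to compare against step by step. Judged on its own terms, your backward direction ($\genF=\SZ\Rightarrow\mathcal{F}$ surjective) is complete and correct: $\D(\Omega_{\mathcal{F}})=\SZ$ gives $\genF\subseteq\Omega_{\mathcal{F}}$, and a non-surjective CA has $\Omega_{\mathcal{F}}$ contained in the proper subshift $\mathcal{F}(\SZ)$, which avoids some cylinder. (A cosmetic point: the proper closed set witnessing $\genF\neq\SZ$ may avoid a cylinder $[w]_j$ with $j\neq 0$, so your sets $B_w$ should carry a position index; nothing essential changes.)

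The forward direction, however, has a genuine gap, and you flag it yourself. Your reduction is sound: $\genF=\SZ$ iff no proper closed set has comeager realm, iff every $B_{w,j}=\bigcap_N\bigcup_{t\ge N}\mathcal{F}^{-t}([w]_j)$ is non-meager; and you are right that $\mu(B_{w,j})>0$ (from invariance of the uniform Bernoulli measure under a surjective CA) does not yield non-meagerness, since a closed nowhere dense set of positive measure is already a meager $G_\delta$ of positive measure. But the step you then propose --- exhibiting a nonempty clopen $W$ in which every open set $\bigcup_{t\ge N}\mathcal{F}^{-t}([w]_j)$ is dense --- is exactly the hard content of the proposition, and it is left entirely unexecuted (``I expect this density/gluing step to be the delicate point''). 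Observe that asking for such a $W=[v]_k$ is precisely asking for a word $v$ that \emph{enables} $w$ in the sense of Lemma~\ref{lem:combchar}, so what remains to be shown is that for a surjective CA every word is enabled --- which is the whole theorem restated in the paper's own vocabulary. Surjectivity has so far entered your argument only through measure preservation, which you yourself demonstrate is insufficient; some genuinely different use of surjectivity (for instance Hedlund's balance property controlling how the cylinders making up $\mathcal{F}^{-t}([w]_j)$ distribute, combined with $\sigma$-ergodicity of $\mu$ to locate a suitable enabling word) is needed and is absent. As it stands the proposal establishes only the easy implication.
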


The last result of this section comes from Remark 4.3 of \cite{djegui-2018} and is reformulated as Lemma 2 of \cite{tor-2020}:
\begin{lemma}\label{lem:combchar}
  Let $\mathcal{F}$ be a CA on $\Sigma^{\Z}$. A word $s\in \Sigma^*$ occurs in $\tilde{\omega}(\mathcal{F})$ if and only if there exists a word $v\in \Sigma^*$ and $i\in\Z$ such that for all $u,w\in \Sigma^*$, there exist infinitely many $t\in\N$ with $\mathcal{F}^t([uvw]_{i-|u|})\cap [s]\neq\emptyset$.
\end{lemma}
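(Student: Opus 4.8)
The plan is to prove both implications by Baire-category arguments, using throughout that each cylinder $[s]$ is clopen. The first thing I would record is the reformulation: since $[s]$ is clopen and $\Sigma^{\Z}$ is compact, for any configuration $c$ one has $\omega(c)\cap[s]\neq\emptyset$ if and only if $\mathcal{F}^t(c)\in[s]$ for infinitely many $t$ (a convergent subsequence of orbit points lying in the closed set $[s]$ produces a limit point in $\omega(c)\cap[s]$, and conversely a limit point in the open set $[s]$ forces the orbit into $[s]$ infinitely often). Hence ``$s$ occurs in $\genF$'' can be attacked through the sets $(\mathcal{F}^t)^{-1}([s])$, which are clopen because $\mathcal{F}^t$ is continuous.

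For the direction $(\Leftarrow)$, assume $v$ and $i$ are given with the stated property. I would consider, inside the clopen set $[v]_i$, the $G_\delta$ set $R=\bigcap_{N}U_N$ where $U_N=\bigl(\bigcup_{t\ge N}(\mathcal{F}^t)^{-1}([s])\bigr)\cap[v]_i$; by the equivalence above, $R$ is exactly the set of $c\in[v]_i$ whose orbit enters $[s]$ infinitely often. Each $U_N$ is open, and I claim it is dense in $[v]_i$: any basic sub-cylinder of $[v]_i$ may be taken to fix a contiguous window $[i-|u|,\,i+|v|+|w|)$, hence equals $[uvw]_{i-|u|}$ for suitable $u,w$, and the hypothesis supplies some $t\ge N$ with $\mathcal{F}^t([uvw]_{i-|u|})\cap[s]\neq\emptyset$, i.e.\ a point of $U_N$ in that sub-cylinder. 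Thus $R$ is comeager in $[v]_i$. Since the realm of attraction of $\genF$ is comeager in $\Sigma^{\Z}$, it is comeager in the open set $[v]_i$ as well, so its intersection with $R$ is nonempty; any $c$ in it satisfies $\omega(c)\subseteq\genF$ and $\omega(c)\cap[s]\neq\emptyset$, whence $\genF\cap[s]\neq\emptyset$.

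For $(\Rightarrow)$ I would argue the contrapositive: assuming no pair $(v,i)$ works, I show $\genF\cap[s]=\emptyset$. The negation says that for every word $v$ and position $i$ there are $u,w$ and some $N$ with $\mathcal{F}^t([uvw]_{i-|u|})\cap[s]=\emptyset$ for all $t\ge N$; equivalently the clopen cylinder $[uvw]_{i-|u|}$ lies in $C_N:=\{c:\forall t\ge N,\ \mathcal{F}^t(c)\notin[s]\}$, a closed set. The set $O:=\bigcup_N \mathrm{int}(C_N)$ is therefore open and meets every cylinder $[v]_i$, so it is dense. By the equivalence, $O\subseteq\{c:\omega(c)\cap[s]=\emptyset\}=\mathcal{D}(\Sigma^{\Z}\setminus[s])$, so the realm of attraction of the closed set $\Sigma^{\Z}\setminus[s]$ contains a dense open set and is comeager. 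By the defining minimality of $\genF$ we get $\genF\subseteq\Sigma^{\Z}\setminus[s]$, i.e.\ $s$ does not occur in $\genF$.

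The main obstacle is bookkeeping rather than a single hard step: one must handle the clopen equivalence between ``infinitely many visits to $[s]$'' and ``$\omega(c)$ meets $[s]$'' cleanly, and must check that an arbitrary sub-cylinder of $[v]_i$ can be normalised to the form $[uvw]_{i-|u|}$ so that the quantifier ``for all $u,w$'' in the statement matches the density argument; once these are in place, both directions follow from Baire category together with the comeagerness of the realm of $\genF$ and its minimality.
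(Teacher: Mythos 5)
Your proof is correct. The paper does not actually prove this lemma --- it imports it from Remark 4.3 of Djenaoui--Guillon as reformulated in T\"orm\"a's Lemma 2 --- and your Baire-category argument (density of the open sets $U_N$ inside $[v]_i$ plus comeagerness of the realm of attraction of $\genF$ for one direction, and minimality of $\genF$ against the closed set $\Sigma^{\Z}\setminus[s]$ for the other) is precisely the standard proof given in those sources, with the clopen-cylinder equivalence and the normalisation of sub-cylinders to the form $[uvw]_{i-|u|}$ handled correctly.
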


The word $v$ is said to \emph{enable} $s$.

\section{General structure of the construction}
\label{sec:constr}
The proof of the main result of this paper relies on a construction already presented in~\cite{Delacourt-2011,bdpst,Hellouin-Sablik-2013}. The present section contains the description of this tool. The idea is to erase most of the content of the initial configuration and start a protected (hence controled) and synchronized evolution. Of course, to ensure that  this property holds for any configuration, one needs strong constraints on the dynamics of the CA. Here, we also want to allow a wide variety of dynamics, hence this property shall hold for almost every initial configuration. In the above-cited articles, it was true for $\mu$-almost every configuration, and here we will use a topological variant.

A brief description of this CA $\mathcal{F}$ follows. Its radius should be at least $2$.

\subsection{Overview}
Some particular state $\sti{}\in\Si$ can only appear in the initial configuration: there is no rule that produces it. The states \sti{} will trigger the desired evolution. In order to avoid having to deal with anything unwanted on the initial configuration (like words produced by the evolution of the CA placed in a wrong context), we add  a mechanism that cleans the configuration from anything that is not produced by \sti{}. This is achieved through the propagation of large signals that have the information of the time passed since a \sti{} state produced it, that is their age. Then, when two such signals going in opposite directions meet, they compare their ages and only the younger survives.

With this trick, any configuration that contains infinitely many \sti{} on both sides will ultimately be covered by protected areas. The \sti{} states also transform into  $\sts{}\in\Si$ states, and we consider the words in the space-time diagram that are delimited by \sts{} states produced by \sti{} states, we call them segments.  The dynamics of the CA inside a segment only depends on its size. In particular, the simulation of the computation of a given Turing machine can be started on each \sts{} state when it appears.

A close construction with a more precise and complete description can be found in \cite[Section 3.1]{bdpst}.

\subsection{Initialization and counters}
The state \sti{} can only appear in the initial configuration: it is not produced by any rule and it disappears immediately. Consider a cell at coordinate $i$ that contains a $\sti{}$ state in the initial configuration. On each side of the \sti{} state, two signals are sent at speed $s_f$ and $s_b$ to the right and symmetrically to the left. The fastest one (speed $s_f$) erases everything it encounters except for its symmetrical counterpart. Each couple of signals is seen as one counter whose value is encoded by the distance $\lfloor k(s_f-s_b)\rfloor$ after $k$ steps of the CA. The key point is that, at any time, the value of a counter is minimal exactly for counters generated by a \sti{} state.

When two counters meet, they compare their values without being affected until the comparison is done. The comparison process is done via signals bouncing on the borders of the counters. The speed of these inner signals is greater than the speeds ($s_f$ and $s_b$) of the border signals. As the value is encoded by the distance between border signals, it is a geometric comparison illustrated in Figure \ref{fig:comp}. If one counter is younger than the other one, the older one is deleted (the right one in Figure \ref{fig:comp}). If they are equal, both counters, that is the $4$ signals, are deleted.

\begin{claim}\label{cla:crosscount}
  For any configuration $c$ where $\sti{}$ occurs, and any coordinate $i\in\Z$, denote $d_i=\min\{|i-j|:c_j=\sti\}$. Then for any $t>s_bd_i$ (where $s_b$ is the speed of the inner border of the counter), $\mathcal{F}^t(c)_i$ does not contain a counter state. 
\end{claim}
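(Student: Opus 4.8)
The plan is to locate cell $i$ relative to the \sti{} states closest to it and to show that the counters emitted by those nearest states sweep over $i$ by time $s_b d_i$, after which $i$ sits permanently inside the protected region, which by construction carries no counter state. First I would fix a \sti{} state at $j_0$ realising the minimum $d_i=\min\{|i-j|:c_j=\sti\}$; assume it lies to the left of $i$ (the right-hand and two-sided cases are symmetric) and follow the counter $C_0$ that $j_0$ emits to the right. Its inner (slow) border advances outward and reaches cell $i$ at time $s_b d_i$, the time the border needs to cover the $d_i$ cells separating $i$ from $j_0$; for $t>s_b d_i$ the cell $i$ then lies strictly behind this border, i.e. on the protected side, where the controlled evolution — and not a counter — resides.

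The heart of the argument is to certify that nothing disturbs this picture, and for this I would lean entirely on the key point already recorded in the construction: at every time the counters generated by \sti{} states carry the minimal value, so a real counter wins every comparison against a non-real one (which is then deleted) and is itself deleted only in a tie with another real counter. Two consequences follow. On one hand, every counter already present in the initial configuration is a ``fake'' counter of strictly larger value, hence is erased as soon as it meets a real counter; in particular the outward sweep of $C_0$ cleans the whole interval between $j_0$ and its fast front of such states. On the other hand, a real counter emitted by any other \sti{} state meets the oppositely moving real counter emitted by its neighbour and annihilates with it at their common midpoint; iterating this over consecutive \sti{} states, no real counter ever crosses the midpoint between two adjacent \sti{} states. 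Consequently, inside the segment bounded by the two \sti{} states bracketing $i$, the only counters that can ever appear are $C_0$ (from the left) and the left-moving counter $C_1$ from the neighbour on the right, and these mutually annihilate at the segment midpoint $m$.

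It then remains to use the defining property that $d_i$ is the distance to the \emph{nearest} \sti{} state to place every threatening interaction on the far side of $i$. Since $j_0$ realises the minimum, $i-j_0=d_i\le j_1-i$ for the right bracket $j_1$, so $i$ lies on the $j_0$-side of $m$; hence $C_0$'s inner border reaches $i$ no later than $C_0$ is annihilated by $C_1$ at $m\ge i$, and $i$ is engulfed by the protected region by time $s_b d_i$ in every case. Symmetrically, any foreign or fake counter approaching $i$ from the far side is annihilated at a cell $\ge i$ (at or beyond $m$, or upon contact with a border of $C_0$) and so never occupies $i$ afterwards. Finally, because the protected frontier advances monotonically outward and cannot be re-crossed — re-entry would require a counter to survive contact with a minimal-value border, which the key point forbids — the cell $i$ stays free of counter states for all $t>s_b d_i$.

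The step I expect to be the main obstacle is the timing bookkeeping hidden in the second and third paragraphs. The comparison of two counters is not instantaneous but is carried out by signals bouncing between the borders, so one must check that during this finite phase no border is momentarily dragged across $i$, and that an arbitrary fake counter from the initial configuration — of arbitrary position, width and value — is provably deleted before its band can reach $i$. Making precise the assertion that the only counters entering the bracketing segment are $C_0$ and $C_1$, together with the non-re-entry claim, is where the minimal-value property and the nearest-distance inequality must be combined with care; the purely geometric reaching-time $s_b d_i$ for $C_0$'s inner border is the easy part.
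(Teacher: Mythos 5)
Your proposal is correct and follows essentially the same route as the paper's own (much terser) proof: the minimal age of the counters emitted by \sti{} states guarantees that only the counter issued from the nearest \sti{} can sweep over cell $i$, and its inner border, moving at speed $s_b$, clears $i$ by time $s_bd_i$, after which the cell lies in the protected region. The bookkeeping issues you flag at the end (duration of the comparison phase, fate of fake counters present in the initial configuration) are genuine but are left equally implicit in the paper's two-sentence argument.
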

\bclaimproof{}
  Each sequence of consecutive \sti{} states creates a left counter at its left extremity and a right counter at its right extremity. They all share a common age which is the minimal one, hence they cannot be crossed by another counter. Thus, at most one of the youngest counters can cross cell $i$. And due to the speed of the inner border of the counters, this is done after  $s_bd_i$ steps.
\eprf

\begin{figure}
  \begin{center}
    \includegraphics[width=.35\textwidth]{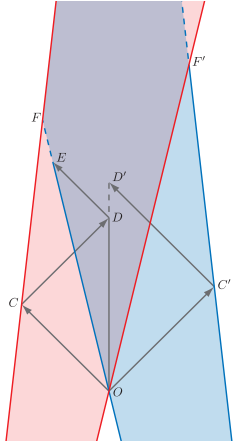}
    \caption{\label{fig:comp}When counters meet in $O$, signals move at speed $1$ towards the borders of the counters that they reach at points $C$ and $C'$. They bounce back until they cross the sign left at point $O$. The one that arrives first has crossed the most narrow (hence youngest) one. It bounces once again to erase the opposite counter whose border is reached at point $E$.}
    \end{center}
\end{figure}

Last rule of this construction: every $\sti{}$ state that is not surrounded by other $\sti{}$ states on both sides is replaced by a $\sts{}$ state after it gave birth to the counters. Figure \ref{fig:count} shows how a typical initial configuration evolves.

For any time $t\in\N$ and any configuration $c$, we call \emph{segment} a set of consecutive cells from coordinate $i$ to $j$ in $\mathcal{F}^t(c)$ with $i,j\in\Z$ such that:\begin{itemize}
\item $\mathcal{F}^t(c)_i=\sts=\mathcal{F}^t(c)_j$
\item for every $i<k<j$,  $ \mathcal{F}^t(c)_k\neq \sts$
\item $c_i=\sti$ and $c_j=\sti$.
\end{itemize}

\begin{figure}
  \begin{center}
    \includegraphics[width=.8\textwidth]{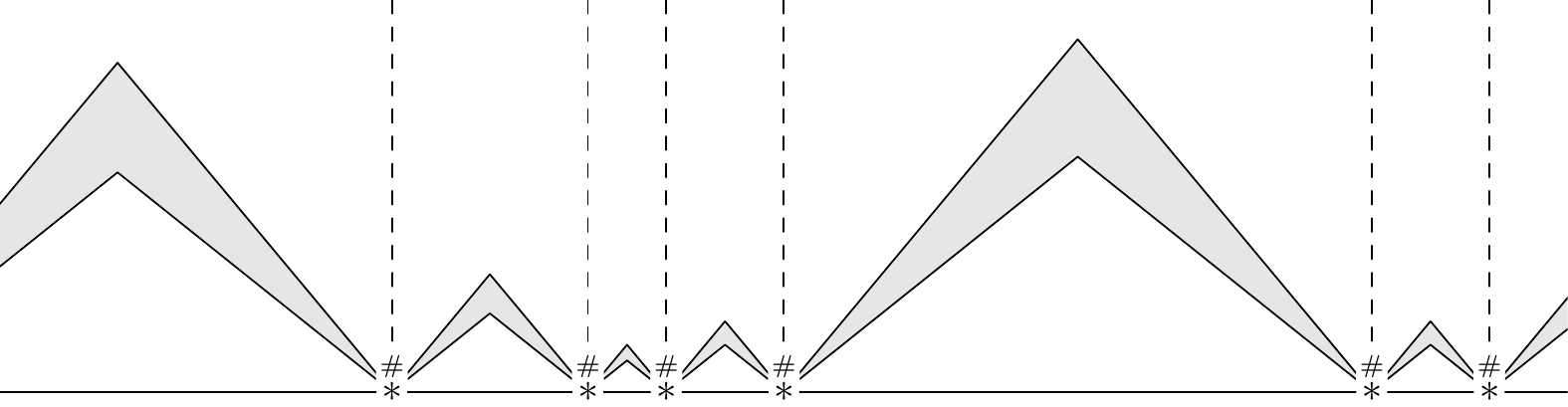}
    \caption{\label{fig:count}Starting from a configuration containing infinitely many \sti{} states on the left and on the right, the $\sti{}$ states generate counters (filled in grey) on both sides that erase everything but another counter going in the opposite direction. These counters eventually meet their opposite and disappear after comparing their ages, hence remain an immaculate configuration with $\sts{}$ states in some positions.}
    \end{center}
\end{figure}

Note that if the radius of the CA can be arbitrarily large, any choice of speeds $s_f>s_b$ can be made.

\begin{claim}\label{cla:countspeed}
  For any $s\in\mathbb{Q}$, there exists a CA implementing such a construction with speed $s_b>s$ (and hence $s_f$).
\end{claim}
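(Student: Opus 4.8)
The plan is to observe that the only thing preventing arbitrarily large $s_b$ in the construction of Section~\ref{sec:constr} is the bounded signal speed of a fixed-radius CA, and that this obstruction disappears once the radius is allowed to grow. So the proof reduces to choosing the speeds first and the radius afterwards, then checking that nothing in the combinatorics of the construction depends on the specific numerical values.

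First I would fix three rational speeds $s < s_b < s_f < s_c$, where $s_c$ denotes the speed of the inner bouncing signals used in the comparison. The two strict inequalities $s_b < s_f$ and $s_f < s_c$ are precisely the conditions the construction needs: $s_f > s_b$ guarantees that the counter value $\lfloor k(s_f-s_b)\rfloor$ grows and remains minimal exactly for \sti-generated counters, so that Claim~\ref{cla:crosscount} still applies; and $s_c > s_f$ ensures the comparison signals overtake the border signals, so the geometric comparison of Figure~\ref{fig:comp} terminates. Choosing $s < s_b$ (and hence $s < s_b < s_f$) is exactly the requirement of the claim.

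Next I would recall that a signal of rational speed $v=p/q$ is realizable in any CA of radius $r \geq \lceil v\rceil$: one attaches to each signal-carrying state a phase in $\Z/q\Z$, and at step $k$ the marker advances by $\lfloor (k+1)p/q\rfloor-\lfloor kp/q\rfloor\in\{\lfloor p/q\rfloor,\lceil p/q\rceil\}$ cells, which never exceeds $r$. Crucially, the age of a counter is encoded geometrically, by the distance between its two border signals, rather than stored explicitly, so the extra data carried by each signal is only a bounded phase and the whole alphabet $\Si$ stays finite. Setting the radius to $r=\lceil s_c\rceil$ then makes all three speeds simultaneously implementable, and the cleaning, counter-generation, comparison and deletion rules described above carry over unchanged.

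The only point requiring genuine care is that the geometric comparison still selects the younger counter once the speeds are arbitrary rationals. I would verify that with inner speed $s_c>s_f$ the bounce points $C,C'$ and the return-crossing at $O$ occur in the same relative order as in Figure~\ref{fig:comp}, so that the signal reflected off the narrower (younger) counter returns first and triggers deletion of the wider one, while equal widths trigger mutual deletion. This is a routine consequence of the ordering $s_c>s_f>s_b$ and does not depend on the particular rationals chosen, which is exactly what yields the conclusion for every $s\in\mathbb{Q}$.
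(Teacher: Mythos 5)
Your proposal is correct and follows the same approach as the paper, whose proof is the one-line observation that a large enough radius allows fast enough signals; you simply fill in the standard details (phase counters modulo the denominator to realize rational speeds, and the ordering $s_c>s_f>s_b$ needed for the geometric comparison) that the paper leaves implicit.
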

\bclaimproof{}
  A big enough radius allows fast enough signals to perform the comparison of counters in due time.
\eprf

\section{Rice's theorem}
Following the steps of the historical proof of Rice and concerning CA, the theorems on limit sets in \cite{Kari-1994} and $\mu$-limit sets in \cite{Delacourt-2011}, we first define properties of generic limit sets of CA, then prove that every non trivial such property is undecidable.

The CA used in \cite{Delacourt-2011} to prove Rice's theorem for $\mu$-limit sets also has the general structure presented in the previous section. The difference lies in what is done inside segments. In the case of $\mu$-limit sets (regardless of the choice of $\mu$), it is possible to dedicate a small \emph{technical} space inside segments to any activity that shouldn't appear in the $\mu$-limit set, as long as this space tends to disappear in density. This is achieved through larger and larger segments. Nothing prevents the states of this technical space to appear in the generic limit set.

\subsection{Properties of generic limit sets of CA}

A property of the generic limit set of CA is a set of subshifts and we say that a generic limit set have this property if it belongs to this set. This way, it depends only on the generic limit set: if two CA have the same generic limit set, this common generic limit set either has or not the property. As mentionned earlier, we consider the countable set $\mathcal{Q}=\{q_0,q_1,\dots\}$, and every alphabet is a finite subset of $\mathcal{Q}=\{q_0,q_1,\dots\}$.

\begin{definition}
A property $\mathcal{P}$ of generic limit sets of cellular automata is a subset of the powerset $\mathscr{P}(\mathcal{Q}^{\Z})$. A generic limit set of some cellular automaton is said to have property $\mathcal{P}$ if it is in $\mathcal{P}$.
\end{definition}

Note that many sets that are not subshifts can belong to a property $\mathcal{P}$, as every generic limit set is a subshift, they do not matter. In particular, every property that does not contain a subshift is equivalent to the empty property that no generic limit set has. A property is said to be \emph{trivial} when either it contains all generic limit sets or none. The most natural example of a non trivial property is the \emph{generic nilpotency}, which is given by the family $\{\{q_i^{\Z}\},i\in\N\}$.

This definition prevents confusions between properties of generic limit sets and properties concerning generic limit sets. For example the property containing every fullshift on finite alphabets is not surjectivity, since the generic limit set of a CA on alphabet $\Sigma$ could be a fullshift on a strictly smaller alphabet. Hence surjectivity is not a property of generic limit sets even if being surjective is equivalent to having a full generic limit set.
.

\subsection{The theorem}

\begin{theorem}
Every non trivial property of the generic limit sets of CA is undecidable.
\end{theorem}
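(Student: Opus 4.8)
The plan is to reduce the halting problem for Turing machines to the problem of deciding a fixed non trivial property $\mathcal{P}$. First I would normalise $\mathcal{P}$. Since $\mathcal{P}$ is non trivial it contains at least one generic limit set and excludes at least one; replacing $\mathcal{P}$ by its complement (which is again non trivial and is decidable exactly when $\mathcal{P}$ is) if necessary, I may assume that the generically nilpotent set $\{q_0^{\Z}\}$ is \emph{not} in $\mathcal{P}$, while some $X=\tilde{\omega}(\mathcal{G})\in\mathcal{P}$ for a fixed CA $\mathcal{G}$. The aim is then to build, computably in a given machine $M$, a CA $\mathcal{F}_M$ with $\genFM=X$ when $M$ halts and $\genFM=\{q_0^{\Z}\}$ otherwise: since $X\in\mathcal{P}$ and $\{q_0^{\Z}\}\notin\mathcal{P}$, this gives the equivalence $M\text{ halts}\Leftrightarrow\genFM\in\mathcal{P}$, whence undecidability.

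For the construction I would use the skeleton of Section~\ref{sec:constr}: the states \sti{} seed counters that, for a comeager set of initial configurations, erase the configuration and cut the space--time diagram into protected segments (Claims~\ref{cla:crosscount} and~\ref{cla:countspeed}). Inside a segment I run a space bounded simulation of $M$ while displaying $q_0$ on the bulk; once $M$ halts, which can only happen in segments large enough to contain the halting computation, the segment switches to faithfully simulating $\mathcal{G}$. To recover the \emph{whole} generic behaviour of $\mathcal{G}$, rather than a single orbit, the simulation is seeded from the pattern planted near the relevant \sti{}: by Lemma~\ref{lem:combchar} it is enough that, for every word $v$ enabling a word $s$ in $\tilde{\omega}(\mathcal{G})$, some enabling word of $\mathcal{F}_M$ plants $v$ inside a large segment that then runs $\mathcal{G}$ and produces $s$ infinitely often for every outer context, and that conversely no word outside $\tilde{\omega}(\mathcal{G})$ becomes enabled.

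Correctness is then verified entirely through Lemma~\ref{lem:combchar}. The intended statement is that the only words enabled by $\mathcal{F}_M$ are $q_0$ alone when $M$ does not halt, giving $\genFM=\{q_0^{\Z}\}$, and exactly the words enabled by $\mathcal{G}$ when $M$ halts, giving $\genFM=\tilde{\omega}(\mathcal{G})=X$. For this I would show that every \emph{technical} ingredient of the construction occurs, at each fixed coordinate and in every context, for only finitely many times, so that by Lemma~\ref{lem:combchar} it is not enabled and does not pollute $\genFM$: this concerns the counter states (handled directly by Claim~\ref{cla:crosscount}), the head and tape of the simulation of $M$, the \sts{} boundaries, and, in the halting case, the finite transient during which a large segment still shows $q_0$ before switching to $\mathcal{G}$.

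The hard part is precisely this last point, and it is where the generic setting genuinely departs from the $\mu$-setting recalled before Section~4.1. In the $\mu$-construction one may park all unwanted activity in a density vanishing technical zone; but here a persistent boundary state such as \sts{} produced by a single \sti{} \emph{is} enabled (take $v=\sti$ and any coordinate), so the naive skeleton leaves $\sts{}\in\genFM$ and never yields a clean target. Overcoming this is the crux: one must arrange that \emph{every} technical state, boundaries included, is transient at each coordinate, for instance by letting segments grow and merge over time so that the boundaries recede to infinity, while \emph{simultaneously} guaranteeing, uniformly over all contexts $u,w$ quantified in Lemma~\ref{lem:combchar}, that a planted enabling block wins the local age comparison and sustains a growing, protected window in which $\mathcal{G}$ is simulated cleanly. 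Reconciling ``all overhead must recede'' with ``the simulation must be robustly enabled in every context'', and checking that the switch from $q_0$ to $\mathcal{G}$ at the halting moment leaves no residual enabled word, is exactly the delicate design problem, and is where the counter mechanism of Section~\ref{sec:constr} (in the spirit of~\cite{Kari-1994}) has to be welded to the in-segment simulation (in the spirit of~\cite{bdpst}).
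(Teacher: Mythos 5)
Your reduction runs in the opposite direction from the paper's, and that reversal is where the proposal goes wrong. You want halting to produce the ``interesting'' generic limit set $X=\tilde{\omega}(\mathcal{G})$ and non-halting to produce nilpotency; the paper does the reverse ($M$ halts $\Rightarrow$ $\genFM=\{q_n^{\Z}\}$ via a spreading state, $M$ never halts $\Rightarrow$ $\genFM=\tilde{\omega}(\mathcal{F}_1)$). Either direction would yield undecidability, but the paper's choice is what makes the construction feasible: halting is a finite event that can be detected inside a single sufficiently large segment and converted into a spreading state that wipes the configuration, whereas reproducing $\tilde{\omega}(\mathcal{G})$ \emph{only after} halting has been detected would require launching a faithful, globally synchronized simulation of $\mathcal{G}$ at an a priori unknown time, which your architecture does not provide.

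The deeper gap is that you propose to simulate $\mathcal{G}$ \emph{inside segments}. A generic limit set is determined by the action of $\mathcal{G}$ on arbitrary infinite configurations --- this is exactly what the quantification over all contexts $u,w$ in Lemma~\ref{lem:combchar} encodes --- and a simulation confined to finite protected windows cannot in general recover $\tilde{\omega}(\mathcal{G})$. The paper's key device, absent from your proposal, is a \emph{second layer} on which $\mathcal{F}_1$ runs over the whole configuration from time $0$, with the counters rewriting that layer via the map $\phi$ (replacing all data not descended from a \sti{} cell by $x_0$) so that the layer provably computes the $\mathcal{F}_1$-orbit of $\phi(c)$ (Claim~\ref{cla:proj}); this is what makes both inclusions $\genFM\subseteq\tilde{\omega}(\mathcal{F}_1)$ and $\tilde{\omega}(\mathcal{F}_1)\subseteq\genFM$ go through Lemma~\ref{lem:combchar}. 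Your closing paragraph correctly identifies the residual-overhead problem (persistent \sts{} boundaries and the like), but the growing/merging-segment fix you sketch is the $\mu$-limit-style remedy and is not what is needed: the paper erases all first-layer overhead actively (abortion signals turn \sts{} into \stsp{} and then delete it) once the in-segment computations on both sides are aborted. Since you explicitly leave this ``delicate design problem'' unresolved, the proposal is a plan rather than a proof, and as stated the plan points at a construction that would not work.
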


This section is dedicated to the proof of Rice's theorem. It is a many-one (actually one-one) reduction from the Halting problem on empty input for Turing machines.
  Take a non trivial property $\mathcal{P}$ of generic limit sets of CA. Assume for example that $\mathcal{P}\cap\{\{q_k^{\Z}\},k\in\N\}$ is infinite (the other case leads to a symmetric proof). As $\mathcal{P}$ is non trivial, it is possible to choose $q_n\in\mathcal{Q}$ and a CA $\mathcal{F}_1$ such that $\tilde{\omega}(\mathcal{F}_1)\notin \mathcal{P}$ and $q_n\notin \Sigma_1$ where $\Sigma_1$ is the alphabet of $\mathcal{F}_1$. Denote now $\mathcal{F}_0$ the CA on alphabet $\{q_n\}$ whose local rule always produces $\{q_n\}$. Hence $\tilde{\omega}(\mathcal{F}_0)=\{q_n^{\Z}\}\in\mathcal{P}$.

  For any Turing machine $M$, we produce a CA $\mathcal{F}_M$ such that:
  \begin{itemize}
  \item if $M$ eventually halts on empty input, the generic limit set of $\mathcal{F}_M$ is $\{q_n^{\Z}\}$;
  \item if $M$ never halts on empty input, then the generic limit set of $\mathcal{F}_M$ is $\tilde{\omega}(\mathcal{F}_1)$.
  \end{itemize}

  \subsubsection{Construction of $\mathcal{F}_M$}

  The CA $\mathcal{F}_M$ contains two layers, one for each of the main tasks.  Denote $\pi_1$ and $\pi_2$ the projections on the first and second layer. The first layer uses alphabet $\Sigma_0$ and it implements the construction described in Section~\ref{sec:constr}. Denote $\_$ the blank state of $\Sigma_0$. The second layer simulates the CA $\mathcal{F}_1$. In some cases, the first layer can be erased, we also add a state $q_n$, hence the alphabet of $\mathcal{F}_M$ is $\Sigma=(\Sigma_0\times \Sigma_1)\cup\{q_n\}\cup \Sigma_1$.

  The set $\Sigma_0\times \Sigma_1$ can be mapped to a subset of $\mathcal{Q}\setminus (\{q_n\}\cup \Sigma_1)$ to ensure that $\Sigma\subset \mathcal{Q}$. For the clarity of the presentation, we will denote the elements of $\Sigma_0\times \Sigma_1$ as couples.

  The idea is to let $\mathcal{F}_1$ compute on the second layer (or by itself if the first layer has been erased), while computation on the first layer will either lead to erase this layer or generate a $q_n$ state that will be spreading (erasing everything but counters) over the whole configuration.

  On the first layer, once a $\sts{}$ state appears (from a $\sti{}$ state), a simulation of $M$ is started on its right. In the general case, another $\sts{}$ state exists further on the right, in which case this simulation takes place in a segment. We will show later that the other case is irrelevant when considering the generic limit set. The simulation evolves freely except if it is blocked by the inner border of a counter, if this happens the simulated Turing head waits until it has enough space to make one more step.
  A binary counter is started in parallel to the right of the $\sts{}$ state.

  The simulation inside a segment should always be finite, it can be interrupted for one of the following reasons.
  \begin{itemize}
    \item The simulation of $M$ halts (because $M$ reaches a final state). Then the state $q_n$ is written, erasing both layers of $\mathcal{F}_M$. This state spreads to both of its neighbors erasing everything, even the $\sts{}$ states, except for the inner and outer borders of the counters of the construction of Section~\ref{sec:constr}.
    \item It reaches  a $\sts{}$ on its right. That is there is not enough space inside the segment and the simulation is aborted. The first layer content of the segment will be erased as explained later.
    \item The counter reaches another $\sts{}$ state. The time allowed for the simulation is over and the simulation is aborted. This third case is necessary to avoid problems due to a loop of the Turing machine in a finite space.
  \end{itemize}

  The states used for the simulation should not appear in the generic limit set, hence they have to be erased once the simulation halts or is aborted. In the first case, the state $q_n$ is written in every cell. In the second case, the first layer only is erased. For the same reason, the \sts{} state has to be erased when the simulation is over in both the segments it delimits.
  
  If the simulation is aborted (due to lack of space or end of the allowed time in the segment), an \emph{abortion signal} is sent in both directions that erases everything of the first layer (except outer or inner border of counters) until it reaches a $\sts{}$ state. A $\sts{}$ state that receives such an abortion signal transforms into a $\stsp{}$ state. If a $\stsp{}$ state receives an abortion signal, it disappears. The point is to ensure that the abortion signals do not travel too far: if the first abortion signal deletes the $\sts$ state on the side of the segment, then the one arriving from the other side will cross. This could lead to the presence of abortion signals in the generic limit set.

  Figure~\ref{fig:genpic} is a schematic view of the evolution of CA $\mathcal{F}_M$ on an ordinary initial configuration.
  
  \begin{figure}
  \begin{center}
    \includegraphics[width=.6\textwidth]{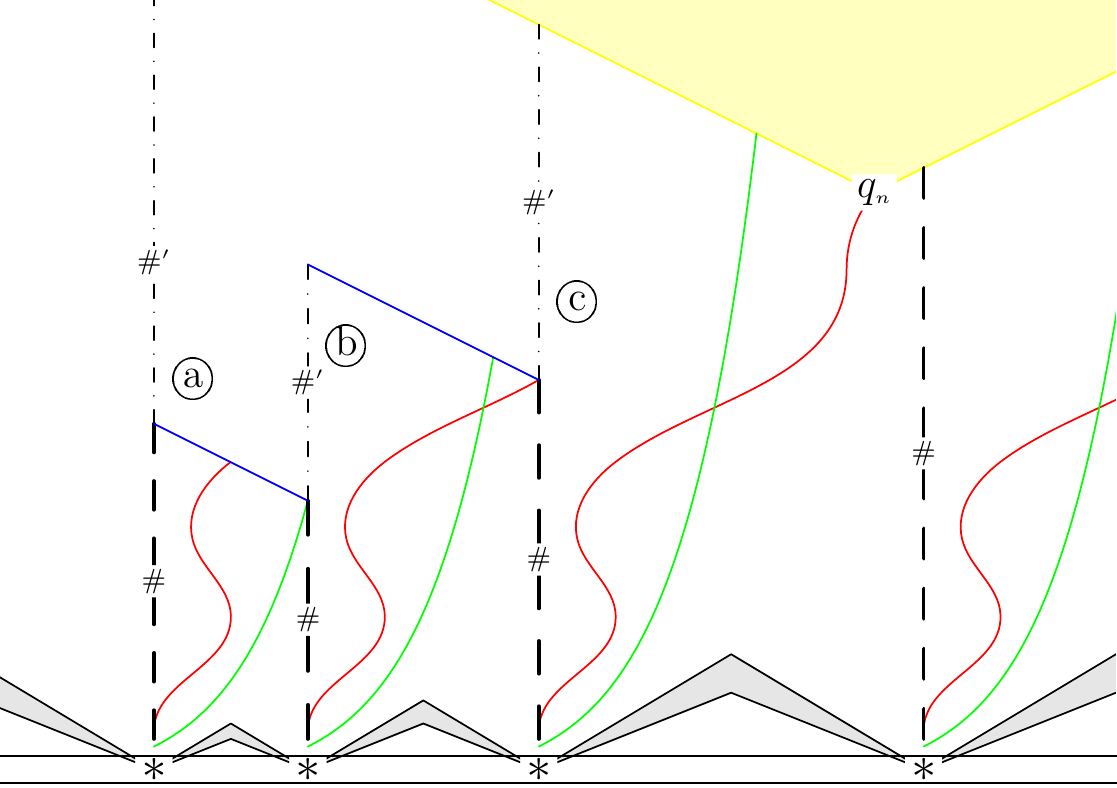}
    \caption{\label{fig:genpic}Starting from the cells in state \sti{} in the initial configuration, the counters (grey areas) protect everything above them. Segments are delimited by \sts{} states and in each of them a simulation of the computation of a Turing machine takes place (the red curve gives the position of the head). The green curve represents the extension of the binary counter used to limit the time of the simulation. In segment \textcircled{a}, the counter reaches the limit and an abortion signal is sent (blue). In segment \textcircled{b}, the head reaches the right of counter and the simulation is stopped with an abortion signal sent to the left. In segment \textcircled{c}, the Turing machine halts and the spreading state $q_n$ is written.}
    \end{center}
\end{figure} 

  \begin{claim}\label{cla:abort}
    There exists an increasing function $f:\N\to\N$ such that the computation of $M$ simulated in a segment of length $n$ either halts or is aborted before time $f(n)$.
  \end{claim}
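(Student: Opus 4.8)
The plan is to bound the running time of the simulation inside a segment of length $n$ by analysing the three possible termination conditions described above, and to show that in every case the process stops before a time that depends only on $n$. I would define $f(n)$ as an explicit upper bound on this time and verify that it is increasing.

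First I would note that the only genuine source of unboundedness is the Turing machine looping forever in the finite space of the segment. This is precisely what the binary counter started in parallel to the right of the left-hand $\sts{}$ state is meant to rule out. So the first step is to observe that the counter grows monotonically and that after at most some number of steps polynomial (or at worst exponential) in $n$ it either fills the available space up to the right-hand $\sts{}$ state, or the head itself reaches that $\sts{}$ state. Concretely, a binary counter using $k$ cells counts up to $2^k$ before overflowing, so in a segment of length $n$ the counter can run for at most on the order of $n\cdot 2^n$ steps before the ``counter reaches another $\sts{}$ state'' condition triggers an abortion. This gives a crude but sufficient bound of the form $f(n) = C\cdot n\cdot 2^n$ for some constant $C$ absorbing the overhead of each counter increment.

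Second, I would handle the interaction with the counters from Section~\ref{sec:constr}: the simulation may be slowed because the simulated head waits when blocked by the inner border of a counter. Here I would invoke Claim~\ref{cla:crosscount}, which guarantees that the inner border of a counter has crossed any fixed cell after a bounded number of steps (at most $s_b d_i$). Since the segment has length $n$, the inner borders that can interfere with the segment clear it after at most $O(s_b\cdot n)$ steps, so the waiting introduced by this mechanism adds only a linear term and does not affect the order of magnitude of the bound. Thus the total time is still dominated by the counter-overflow bound, and taking $f$ to be the maximum of the two contributions (which is increasing, since both summands are increasing in $n$) completes the argument.

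The main obstacle I expect is making the second step fully rigorous, namely controlling exactly how the free evolution of the simulation interleaves with the waiting imposed by the moving counter borders. One must check that the head is never permanently stalled --- that the inner border eventually passes and frees up the needed cell --- and that the bookkeeping of ``enough space to make one more step'' does not secretly allow an unbounded delay. Once Claim~\ref{cla:crosscount} is invoked to bound the clearing time of the counters, this reduces to the routine observation that the sum of a bounded-per-cell delay over the $n$ cells of the segment is itself bounded by a function of $n$; but stating the dependence cleanly so that $f$ is manifestly increasing and independent of the particular configuration is the part requiring care.
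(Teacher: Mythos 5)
Your proposal is correct and follows essentially the same route as the paper: the paper's proof of this claim is a one-line observation that the binary counter forces an abortion after $2^n$ steps in a segment of length $n$, which is exactly your first step (up to the harmless extra factor of $n$ for increment overhead). Your additional bookkeeping about the delay caused by the inner borders of the age counters, via Claim~\ref{cla:crosscount}, is extra care that the paper omits but that does not change the underlying argument.
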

  \bclaimproof{}
In a segment of length $n$, due to the binary counter, if the simulation of $M$ has not reached a final state after $2^n$ steps, the computation is aborted.
  \eprf

\subsubsection{Ensuring a sound computation on the second layer}
  
  The proof relies on the fact that, with most initial configurations:
  \begin{itemize}
  \item if $M$ halts, there will exist a large enough segment in which the computation has enough space and time to reach its end, thus producing state $q_n$ that erases everything.
  \item if $M$ does not halt, the computation will be eventually aborted in every segment and only the second layer will remain with a computation of  $\mathcal{F}_1$.
  \end{itemize}

  In order to ensure the second point, we need to deal with the case of $q_n$ states existing before the counters of Section~\ref{sec:constr} clean the configuration on the first layer. It can for example happen due to $q_n$ states on the initial configuration. In this case, the content of the second layer is lost. As it is impossible to control what happens outside the area protected by counters, the counters will not only stop the spreading of $q_n$ but also write a possible configuration for $\mathcal{F}_1$, thus deleting all data that does not descend from the cells containing $\sti$ in the first layer of the initial configuration.

  Let us assume for simplicity that the radius of $\mathcal{F}_1$ is $1$. For the rest of the proof of the theorem, denote $x_0$ some state of $\Sigma_1$. The space-time diagram of $\mathcal{F}_1$ with initial configuration $x_0^{\Z}$ is ultimately periodic, contains only uniform configurations and is entirely described by a finite sequence of distinct states $(x_0,x_1,\dots,x_p,\dots,x_{p+T},x_p)$. The counters will write the second layer of the configuration as if every information coming from outside the protected area (between counters) was obtained from the uniform initial configuration $x_0^{\Z}$:
  \begin{itemize}
  \item $x_t$ at step $t\leq p$;
  \item $x_{p+(t-p)\mod T}$ at step $t\geq p$.
  \end{itemize}

  As a finite amount of information is needed, the local rule of the CA $\mathcal{F}_M$ can be designed to do so. This is illustrated in Figure~\ref{fig:rewritesl}. As said in Claim \ref{cla:countspeed}, it is possible to use that construction with outer borders of counters moving at speed $1$.

  If the first layer contains $\sti{}$, the state on the second layer is not rewritten and is used for the simulation of $\mathcal{F}_1$.

\begin{figure}
  \begin{center}
    \includegraphics[width=.5\textwidth]{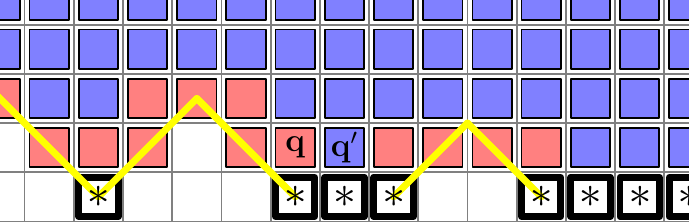}
    \caption{\label{fig:rewritesl}Partial representation of a space-time diagram of $\mathcal{F}_M$. The red cells are where the counters rewrite the second layer assuming that what does not come from a $\sti{}$ state is $x_0$. The blue cells are where the computation of $\mathcal{F}_1$ happens normally on the second layer. The yellow lines are the outer borders of counters, we assume here they have speed $1$ for the illustration. Denote $\delta_1$ the local rule of $\mathcal{F}_1$. Then $q'=\delta_1(x,y,z)$ which are its state ($y$) and the ones of its neighbors ($x$ and $z$) at time $0$. And $q=\delta_1(x_0,x,y)$.}
    \end{center}
\end{figure} 

To any initial configuration $x\in\Si^{\Z}$, corresponds a configuration in  $\Si_1^{\Z}$ where all the deleted data is replaced by $x_0$. Denote $\phi:\Si\to\Si_1$ such that:
  \begin{itemize}
  \item $\phi(\sti,x)=x$;
  \item $\phi(s,x)=x_0$ when $s\neq\sti$;
  \item $\phi(x)=x_0$ when $x\in \Si_1\cup\{q_n\}$.
  \end{itemize}
  It can be extended to words in $\Si^*$ and configurations in $\SZ$. 

\begin{claim}\label{cla:proj}
  Let $c$ be a configuration in $\Sigma^{Z}$ and $i\in\Z$ a coordinate such that there exists $j<i<k$ with $c_j=\sti=c_k$.
  Then for any $t>s_bd_i$ (as in Claim~\ref{cla:crosscount}),
  \[\pi_2\left(\mathcal{F}_M^t(c)_i\right)\in\left\{\mathcal{F}_1^t(\phi(c))_i,q_n\right\}\]
  We extend here $\pi_2$ as the identity to $\Si_1\cup\{q_n\}$.
\end{claim}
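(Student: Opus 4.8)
The plan is to fix $i$ together with witnesses $j<i<k$, $c_j=\sti=c_k$, and to study cell $i$ from the moment the counters have finished passing over it. By Claim~\ref{cla:crosscount}, for every $t>s_bd_i$ the cell $i$ carries no counter state; since there is a $\sti$ on each side of $i$, the counters issued from the nearest $\sti$ states on the two sides have by then swept and sealed off $i$ inside a clean segment. From that time on the first layer at $i$ is in its \emph{controlled} regime, and exactly one of two things has happened: either the spreading state $q_n$ has already reached $i$, or it has not. I would argue along this dichotomy, the two alternatives producing the two members of the right-hand set.

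First I would treat the $q_n$ case. If $q_n$ has reached $i$ at some time $\le t$ then, by construction, it spreads to both neighbours and overwrites everything except inner and outer borders of counters; as Claim~\ref{cla:crosscount} forbids a counter border at $i$ for $t>s_bd_i$, nothing can remove it afterwards, so the state at $i$ is $q_n$ and $\pi_2(\mathcal{F}_M^t(c)_i)=q_n$. This yields the alternative $q_n$.

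The core is the complementary case, where $q_n$ has not reached $i$ by time $t$; here I claim $\pi_2(\mathcal{F}_M^t(c)_i)=\mathcal{F}_1^t(\phi(c))_i$, which I would establish by induction on $t$ with an invariant covering the whole clean region rather than the single cell $i$: for every cell $m$ of the clean region not yet touched by $q_n$, $\pi_2(\mathcal{F}_M^t(c)_m)=\mathcal{F}_1^t(\phi(c))_m$. The definition of $\phi$ provides the initialisation: at a cell whose first layer is $\sti$ the second layer is kept, so it already equals $\phi(c)_m$, whereas the counters overwrite every other second-layer entry of the clean region, replacing the data that does not descend from a $\sti$ by the value it would have if the outside were uniformly $x_0$, which is exactly $\phi(c)_m=x_0$ there. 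For the inductive step I would use that the second-layer rule of $\mathcal{F}_M$ is, strictly inside the clean region, the rule $\delta_1$ of $\mathcal{F}_1$ itself (the blue cells of Figure~\ref{fig:rewritesl}), while at a moving outer border it applies $\delta_1$ reading the absent outside neighbour as $x_0$ (the red cells, $q=\delta_1(x_0,x,y)$). Because $\mathcal{F}_1$ has radius $1$ and the outer borders may be taken to move at speed $1$ (Claim~\ref{cla:countspeed}), and because in $\phi(c)$ every cell outside the segment is $x_0$, this boundary update feeds precisely the value that $\mathcal{F}_1$ run on $\phi(c)$ would see; so a single application of $\delta_1$ turns the time-$(t-1)$ data into $\mathcal{F}_1^t(\phi(c))_m$.

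The hard part will be the bookkeeping at the counter boundary, that is, checking that the speed-$1$ outer border stays exactly abreast of the speed-$1$ information front of $\mathcal{F}_1$ on $\phi(c)$, so that each neighbour required to update a clean cell is itself either already clean (inductive hypothesis) or a freshly rewritten border cell carrying the correct $x_0$-descended value; this is the only place where the naive induction driven by $t>s_bd_m$ leaves a boundary layer of width about $s_b$ that must be filled by the rewriting mechanism. It is precisely here that matching the outer-border speed to the radius-$1$ propagation of $\mathcal{F}_1$ is used, and where the two-sided occurrence of $\sti$ matters, since it guarantees that on both sides the region lying outside the segment is indeed the $x_0$-part of $\phi(c)$ and that no uncontrolled data enters the light cone of $i$ before the borders seal it. Assembling the two cases then gives $\pi_2(\mathcal{F}_M^t(c)_i)\in\{\mathcal{F}_1^t(\phi(c))_i,q_n\}$ for all $t>s_bd_i$.
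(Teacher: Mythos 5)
Your argument follows the same route as the paper's (much terser) proof: once $t>s_bd_i$ the cell lies in the protected area, the second layer there evolves by the rule of $\mathcal{F}_1$ applied to the configuration rewritten into $\phi(c)$, and the only possible interruption is the spreading state $q_n$. Your dichotomy on whether $q_n$ has reached $i$, the induction over the clean region, and the observation that the speed-$1$ outer border matched to the radius-$1$ light cone of $\mathcal{F}_1$ is what makes the rewritten boundary values agree with the uniform evolution of $x_0^{\Z}$ are exactly the mechanisms the construction relies on, spelled out in more detail than the paper itself provides.
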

\bclaimproof{}
As $t>s_bd_i$, the cell at coordinate $i$ is in the protected area (above \sti{} states or counters) at time $t$. Then the second layer has been computed with the rule of $\mathcal{F}_1$ and the second layer of the configuration rewritten by counters into images of $\phi(c)$. The only way to interrupt the computation of $\mathcal{F}_1$ is to erase the cell and write $q_n$, hence the claim.
\eprf

\subsubsection{Proof of the theorem}

It remains to prove the next $2$ lemmas.
  \begin{lemma}
    If $M$ eventually halts on the empy input, then $\genFM=\tilde{\omega}(\mathcal{F}_0)\in\mathcal{P}$.
  \end{lemma}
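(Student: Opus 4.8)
The plan is to show directly that $\genFM=\{q_n^{\Z}\}=\tilde\omega(\mathcal{F}_0)$, which lies in $\mathcal{P}$ by the choice of $q_n$ made in the setup. Since the generic limit set of any CA is a nonempty subshift (if it were empty its realm of attraction would be empty, contradicting that this realm is comeager), it suffices to prove the inclusion $\genFM\subseteq\{q_n^{\Z}\}$. By the definition of the generic limit set as the intersection of all closed subsets of $\SZ$ whose realms of attraction are comeager, and since $\{q_n^{\Z}\}$ is visibly closed, this reduces to showing that the realm of attraction $\D(\{q_n^{\Z}\})$ is comeager.

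First I would extract from the hypothesis a threshold on segment lengths. As $M$ halts on empty input, it halts in finitely many steps using finitely many tape cells; combined with the $2^n$-step budget of Claim~\ref{cla:abort}, there is an integer $n_0$ such that the simulation of $M$ in any segment of length at least $n_0$ reaches a final state and writes the spreading state $q_n$, rather than being aborted. In particular, whenever the factor $\sti\,\_^{\,n_0}\,\sti$ occurs in the initial configuration, its two \sti{} states are consecutive and delimit a genuine segment of length $\geq n_0$, which therefore eventually emits $q_n$.

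Next I would identify a comeager set $G$ of initial configurations on which the orbit converges to $q_n^{\Z}$. Let $G$ be the set of configurations in which the factor $\sti\,\_^{\,n_0}\,\sti$ occurs infinitely often both to the left and to the right of the origin; for a fixed factor this is a countable intersection of dense open cylinders, so $G$ is comeager. It remains to check $G\subseteq\D(\{q_n^{\Z}\})$, i.e. that for $c\in G$ and every $k$ one has $\mathcal{F}_M^t(c)_{[-k,k]}=q_n^{2k+1}$ for all large $t$. Fix $k$. By Claim~\ref{cla:crosscount}, for each $j$ with $|j|\le k$ no counter state occupies cell $j$ after time $s_bd_j$, so after a finite time the whole window $[-k,k]$ is in the protected area and free of counters. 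Since $c\in G$, there are large segments arbitrarily far on both sides; each clears its own counters and runs $M$ to a halt in finite time, emitting $q_n$, which then spreads toward the origin and overwrites everything it meets except the transient borders of counters. Picking one large segment on each side, their $q_n$ wavefronts reach $[-k,k]$ after finitely many steps, and once the window is simultaneously counter-free and reached by $q_n$, it stays equal to $q_n^{2k+1}$ forever because $q_n$ is spreading and nothing but a counter can overwrite it. Hence $\mathcal{F}_M^t(c)\to q_n^{\Z}$, so $c\in\D(\{q_n^{\Z}\})$.

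This yields $\D(\{q_n^{\Z}\})\supseteq G$ comeager, so $\{q_n^{\Z}\}$ is one of the closed sets defining $\genFM$ and $\genFM\subseteq\{q_n^{\Z}\}$; nonemptiness then gives $\genFM=\{q_n^{\Z}\}=\tilde\omega(\mathcal{F}_0)\in\mathcal{P}$. The main obstacle I anticipate is precisely the convergence argument of the third paragraph: one must argue that the spreading of $q_n$ is only ever \emph{delayed}, never permanently stopped, by counters. The key facts to marshal are that near any fixed cell only finitely many counters are ever present (Claim~\ref{cla:crosscount}), that the leftover second-layer content of aborted small segments and the \sts{} markers are all overwritten by $q_n$, and that even an unmatched outer counter keeps moving outward so that $q_n$ fills the cells it vacates; together these ensure every cell is eventually, and then permanently, covered by $q_n$.
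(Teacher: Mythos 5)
Your proof is correct, but it takes a genuinely different route from the paper. The paper never touches the realm of attraction directly: it argues entirely through the combinatorial characterization of Lemma~\ref{lem:combchar}, assuming a state $s$ occurs in $\genFM$ with enabling word $v$ at position $i$, and then choosing the specific extension $u=(\_\sti\_^S\sti\_,x_0^{S+4})$ to the left of $v$ so that a single large segment is guaranteed to form, halt, and emit $q_n$, which then spreads to cell $0$ and forces $s=q_n$. You instead go back to the definition of the generic limit set as an intersection of closed sets with comeager realms, and exhibit a comeager set $G$ (configurations containing $\sti\_^{\,n_0}\sti$ infinitely often on both sides) on which orbits converge to $q_n^{\Z}$, so that $\mathcal{D}(\{q_n^{\Z}\})\supseteq G$ is comeager and the inclusion $\genFM\subseteq\{q_n^{\Z}\}$ follows. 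Both arguments ultimately rest on the same dynamical facts (Claim~\ref{cla:crosscount} to clear counters from a fixed window, the existence of a halting segment, and the fact that counters only delay and never permanently block the spread of $q_n$ --- which you rightly flag as the real content and justify adequately). What the paper's approach buys is locality: the relevant segment sits immediately next to $v$, so only a single explicit time bound is needed and one never has to prove full convergence of orbits. What your approach buys is independence from Lemma~\ref{lem:combchar} and a slightly stronger conclusion (a concrete comeager set of configurations attracted to $q_n^{\Z}$, not just the inclusion of the generic limit set). One small point of care: your set $G$ should be phrased as a countable intersection of dense open sets (each being a union of cylinders requiring an occurrence of the factor beyond distance $N$), rather than of ``dense open cylinders''; a single cylinder is never dense. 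With that wording fixed, the argument is complete.
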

  \begin{proof}
  Suppose that $M$ eventually halts on the empty input. Then there exists a large enough size $S$ such that the computation in any segment larger than $S$ has enough time and space to reach its end. Then the state $q_n$ appears and spreads at speed $1$ in both  directions except if it encounters an inner or outer border of a counter.

  If some state $s\in\Si$ occurs in $\genFM$ then according to Lemma~\ref{lem:combchar}, there exists a word $v$ that enables it when placed at position $i\in\Z$. Take now  $u=(\_\sti\_^S\sti\_,x_0^{S+4})$, $w$ the empty word and some $c\in[uvw]_{i-|u|}$. Counters are generated by the two $\sti{}$ states at coordinates $i-(S+3)$ and $i-2$, hence there exists $t_0\in\N$ such that at time $t_0$, the cell $0$ has  been crossed by counters generated by $\sti{}$ states. According to Claim~\ref{cla:crosscount}, it will not contain any state of outer or inner border of a counter anymore.
  Moreover, a segment is created between coordinates $i-(S+3)$ and $i-2$. As it is large enough, the state $q_n$ will be written at time $t_1\in\N$. Then it will spread and reach cell $0$ before time $t_1+\max(|i-(S+2)|,|i-2|)$ or $t_0$ if the inner border of a counter slows it down. This is illustrated by Figure~\ref{fig:proof1}. Hence there exists $t_2\in\N$ such that $\forall t\geq t_2, \mathcal{F}_M^t(c)\in [s]\Leftrightarrow s=q_n$. Thus $\genFM\subseteq  \{q_n\}^{\Z}$. As $\genFM$ cannot be empty, we have $\genFM=\{q_n\}^{\Z}=\tilde{\omega}(\mathcal{F}_0)$ and $\genFM\in\mathcal{P}$.

  \begin{figure}
  \begin{center}
    \includegraphics[width=.6\textwidth]{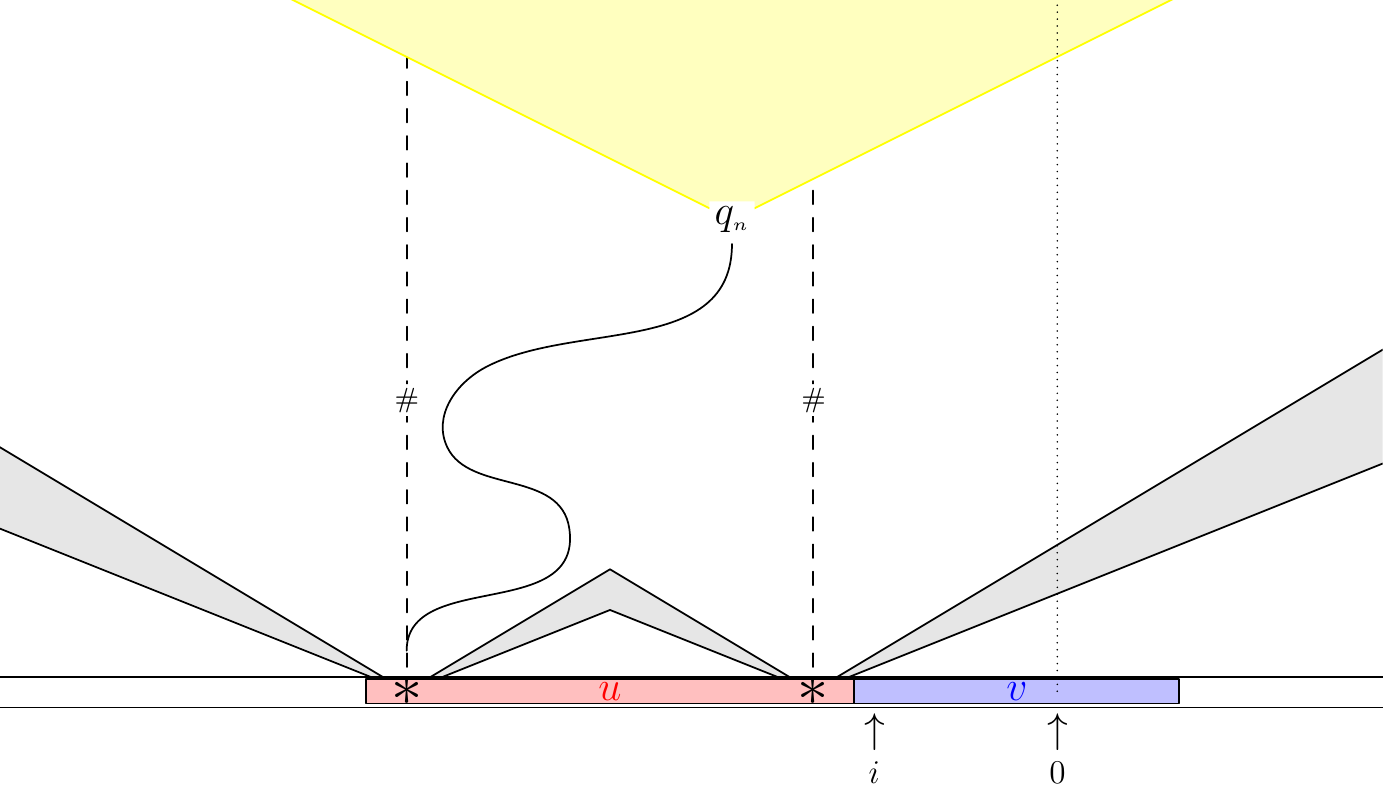}
    \caption{\label{fig:proof1}The word $v$ (in blue) is supposed to enable state $s$. Then for a good choice of $u$ (in red), a segment will simulate a computation of $M$ that eventually halts and produces $q_n$. This state spreads (in yellow) and eventually reaches coordinate $0$.  }
    \end{center}
\end{figure} 
  \end{proof}
  
  \begin{lemma}
If  $M$ never halts on the empty input, then $\tilde{\omega}(\mathcal{F}_M)=\tilde{\omega}(\mathcal{F}_1)\notin\mathcal{P}$.
  \end{lemma}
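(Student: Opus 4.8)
The plan is to show that $\genFM$ and $\tilde{\omega}(\mathcal{F}_1)$ have the same language as subshifts of $\Si^{\Z}$ (recall $\tilde{\omega}(\mathcal{F}_1)\subseteq\Si_1^{\Z}\subseteq\Si^{\Z}$), and to conclude using that a subshift is determined by its language. Everything rests on a key observation I would establish first, combining Claims~\ref{cla:crosscount}, \ref{cla:proj} and \ref{cla:abort}. Since $M$ never halts on empty input, the simulation inside every segment is eventually aborted and never writes $q_n$; the only occurrences of $q_n$ come from the initial configuration and are blocked by the counters. Hence, if $c\in\Si^{\Z}$ and $p\in\Z$ admit $j<p<k$ with $c_j=\sti=c_k$, then $p$ lies in a segment of length at most $k-j$, whose computation is aborted before time $f(k-j)$; after the induced clean-up its first layer is erased and stays erased, and no $q_n$ ever reaches $p$. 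Together with Claim~\ref{cla:proj} this yields a time $T$, depending only on the first-layer content around $p$ (hence independent of the second layer, since the first-layer dynamics is autonomous), such that for all $t>T$ the cell $\mathcal{F}_M^t(c)_p$ is a \emph{pure} state of $\Si_1$ equal to $\mathcal{F}_1^t(\phi(c))_p$.

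For $\tilde{\omega}(\mathcal{F}_1)\subseteq\genFM$, I would take a word $s$ occurring in $\tilde{\omega}(\mathcal{F}_1)$, enabled by $v'$ at position $i'$ (Lemma~\ref{lem:combchar}), so $s\in\Si_1^*$, and show that the lift $v$ of $v'$ obtained by placing each letter on the second layer under a $\sti$ first-layer state enables $s$ for $\mathcal{F}_M$ at $i'$. Given arbitrary $u,w\in\Si^*$, set $u'=\phi(u)$, $w'=\phi(w)$ and let $d_t\in[u'v'w']$ be the witnesses from the enabling of $s$ for $\mathcal{F}_1$, so $\mathcal{F}_1^t(d_t)\in[s]$ for infinitely many $t$. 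I would lift each $d_t$ to $c_t\in[uvw]$ by copying $uvw$ on the cylinder and putting coupled states $(\sti,(d_t)_q)$ everywhere outside; then $\phi(c_t)=d_t$, and since the exterior is entirely made of $\sti$, the reading window $\{0,\dots,|s|-1\}$ is flanked by $\sti$ at bounded distance. The key observation then gives $\mathcal{F}_M^t(c_t)_p=\mathcal{F}_1^t(d_t)_p=s_p$ for every $p$ in the window and every large $t$, so $\mathcal{F}_M^t([uvw])\cap[s]\neq\emptyset$ infinitely often.

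Conversely, let $s$ occur in $\genFM$, enabled by $v$ at $i$. Instantiating the enabling with long coupled-$\sti$ words makes $\{0,\dots,|s|-1\}$ flanked by $\sti$ at bounded distance in every $c\in[uvw]$, so the key observation forces $\mathcal{F}_M^t(c)_p\in\Si_1$ for large $t$, whence $s\in\Si_1^*$. To see $s$ occurs in $\tilde{\omega}(\mathcal{F}_1)$, I would use the $\mathcal{F}_1$-enabling word $v''=\phi(u_0vw_0)$ at position $i-|u_0|$, where $u_0,w_0$ are fixed coupled-$\sti$ words long enough to flank the window; for any $U',W'$ one lifts to coupled-$\sti$ words $U,W$, applies the enabling of $s$ for $\mathcal{F}_M$ with left word $Uu_0$ and right word $w_0W$, and projects the witnesses through $\phi$, transferring $[s]$ from $\mathcal{F}_M$ to $\mathcal{F}_1$ exactly as above. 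This gives the reverse inclusion.

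The two inclusions give equal languages, hence $\genFM=\tilde{\omega}(\mathcal{F}_1)\notin\mathcal{P}$ by the choice of $\mathcal{F}_1$. The main obstacle is the key observation, and precisely its first-layer part: Claim~\ref{cla:proj} only controls $\pi_2$, so I must argue separately that the first layer is genuinely erased at $p$ (leaving a pure $\Si_1$ state rather than a coupled one) and that this erasure is permanent. This is where it is essential that $M$ never halts, so that each bounded segment around $p$ is aborted in finite time by Claim~\ref{cla:abort}, and that the abortion/$\stsp$ mechanism prevents the clean-up signals from re-polluting $p$ while the counters keep any initial $q_n$ away. A secondary, purely bookkeeping difficulty is ensuring the reading window is always flanked by $\sti$ regardless of where $i$ lies; this is handled by padding with coupled-$\sti$ words, either in the exterior of the lifted configurations or inside the enabling word itself.
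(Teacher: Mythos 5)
Your proposal is correct and follows essentially the same route as the paper: both inclusions are obtained from Lemma~\ref{lem:combchar} by transporting enabling words between $\mathcal{F}_M$ and $\mathcal{F}_1$ through $\phi$ and coupled-$\sti$ lifts, with Claims~\ref{cla:crosscount}, \ref{cla:proj} and \ref{cla:abort} (plus the non-halting of $M$ and the counters blocking any initial $q_n$) giving exactly your ``key observation'' that in the protected area the first layer is eventually and permanently erased and the second layer computes $\mathcal{F}_1$ on $\phi(c)$. The only divergences are bookkeeping choices of where to place the $\sti$-padding; note that your lift putting $\sti$ on the \emph{entire} first layer of the transported words is what is actually needed for $\phi(u)=u'$ to hold, whereas the paper's $u=(\sti\_^{|u'|-1},u')$ appears to be a slip.
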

  \begin{proof}
    Suppose now that $M$ never halts on the empty input. We will show that $\tilde{\omega}(\mathcal{F}_1)=\tilde{\omega}(\mathcal{F}_M)$.

    First, we show that:
    \begin{claim}
      $\tilde{\omega}(\mathcal{F}_M)\subseteq\tilde{\omega}(\mathcal{F}_1)$
      \end{claim}
\bclaimproof{}

      Let $s$ be a word that occurs in $\tilde{\omega}(\mathcal{F}_M)$. According to Lemma~\ref{lem:combchar}, there exists a word $v$ that enables $s$ when placed at coordinate $i$.  As any word containing $v$ as a factor also enables $s$, we can choose $v$ such that $i<0$ and $i+|v|>|s|$.

  We prove that $v'=\phi(v)$ at coordinate $i$ enables $s$ for $\mathcal{F}_1$. To do so, we will use Lemma~\ref{lem:combchar}. Take $u',w'\in \Si_1^{*}$ and denote $u=(\sti\_^{|u'|-1},u')$ and $w=(\_^{|w'|-1}\sti,w')$. Denote $n=|uvw|$, $T\geq \max(s_bn,f(n)+n)$ (where $s_b$ is  the speed of inner borders of counters), $z_1=i-|u|$ and $z_2=i+|vw|$. Apply Lemma~\ref{lem:combchar} with $\mathcal{F}_M$, $v$, $u$ and $w$. For infinitely many times $t$, there exist a configuration $c\in [uvw]_{i-|u|}$ such that $\mathcal{F}_M^t(c)\in [s]$. Using Claim~\ref{cla:proj} with cells at coordinates $z_1$ and $z_2$ containing state $\sti$, we get that for any  $t>T$,
\[\forall z_1\leq j\leq z_2,\pi_2\left(\mathcal{F}_M^t(c)_j\right)\in\left\{\mathcal{F}_1^t(\phi(c))_j,q_n\right\}\]
That is : $\pi_2(s)=\pi_2\left(\mathcal{F}_M^t(c)_{[0,|u|-1]}\right)\in\left\{\mathcal{F}_1^t(\phi(c))_{[0,|u|-1]},q_n^{|u|}\right\}$.

Due to the $\sti$ states placed at coordinates $z_1$ and $z_2$, we can also apply Claim~\ref{cla:abort} and we get that the computation is finished in any segment between coordinates $z_1$ and $z_2$ at time $f(n)$. After $n$ more steps, the potential abortion signals have reached the borders and every cell between coordinates $z_1$ and $z_2$ contains a state in $\Si_1\cup\{q_n\}$. Moreover, as these cells belonged to a segment in the protected area, and since $M$ never halts on the empty input, this state cannot be $q_n$. Hence $s\in\Si^*$ and as $\pi_2$ is the identity on $\Si$, necessarily $s=\pi_2(s)=\mathcal{F}_1^t(\phi(c))_{[0,|u|-1]}$.

As $\phi(c)\in[u'v'w']_{i-|u'|}$ and as $\mathcal{F}_1^t(\phi(c))_{[0,|u|-1]}=s$ for infinitely many times $t$, Lemma~\ref{lem:combchar} allows to conclude that $v'$ enables $s$ that is $v'$ occurs in $\tilde{\omega}(\mathcal{F}_1)$.

\eprf

Then we prove the opposite:
\begin{claim}
      $\tilde{\omega}(\mathcal{F}_1)\subseteq\tilde{\omega}(\mathcal{F}_M)$
      \end{claim}
\bclaimproof{}
  Let $s$ be a word that occurs in $\tilde{\omega}(\mathcal{F}_1)$. According to Lemma~\ref{lem:combchar}, there exists a word $v$ that enables it when placed at coordinate $i$. We prove that $v'=(\_^{|i|}\sti^{|v|}\_^{|i|+|s|},x_0^{|i|}vx_0^{|i|+|s|})$ at coordinate $i-|i|$ enables $s$ for $\mathcal{F}_M$.

  For any $u',w'\in\Si^*$, denote $n=|u'v'w'|$. Let $T\geq \max(s_bn,f(n)+n)$ (where $s_b$ is still the speed of inner borders of counters) and denote
  \begin{itemize}
  \item $u=\phi(\pi_2(u'))x_0^{|i|}$;
  \item $w=x_0^{|i|+|s|}\phi(\pi_2(w'))$.
  \end{itemize}
  As $v$ enables $s$ for $\mathcal{F}_1$, there exists $c\in[uvw]_{i-|u|}$ and $t\geq T$ such that $\mathcal{F}_1^t(c)\in [s]$. We can write $c$ as $c_-uvwc_+$ where $c_-$ and $c_+$ are semi-infinite configuration in $^{\omega}\Si_1$ and $\Si_1^{\omega}$ respectively. Define $c'=(^{\omega}\sti{},c_-)u'v'w'(\sti{}^{\omega},c_+)\in [u'v'w']_{i-|i|-|u'|}$, we will prove that $\mathcal{F}_M^t(c')\in [s]$.
 First, note that $c=\phi(\pi_2(c'))$. Then using Claim~\ref{cla:proj}, we have that for every $j\in[|i-|i|,i+|i|+|s|]$:
  \[\pi_2\left(\mathcal{F}_M^t(c')_j\right)\in\left\{\mathcal{F}_1^t(c)_j,q_n\right\}\]
  As $t\geq T\geq s_bn$ and $M$ does not halt on the empty input, $\pi_2\left(\mathcal{F}_M^t(c')\right)_j\neq q_n$. And as $t\geq T\geq f(n)+n$, the computation is aborted in every segment fully located between coordinates $|i-|i|$ and $i+|i|+|s|$ before step $f(n)$. After $n$ more steps, the first layer of these segments is erased, in particular for coordinates $j$ with $0\leq j < |s|$. Hence $\mathcal{F}_M^t(c')_{[0,|s|-1]}=\pi_2\left(\mathcal{F}_M^t(c')\right)_{[0,|s|-1]}=\mathcal{F}_1^t(c)_{[0,|s|-1]}=s$ and $s\in\genFM$.
  \eprf
  \end{proof}

  The last two lemmas show that $M\mapsto \mathcal{F}_M$ is a reduction from the Halting problem of Turing machines on empty input to the problem of decision of $\mathcal{P}$.

\section{Conclusion and perspectives}
We proved Rice's theorem for generic limit sets of CA, which means that for example generic nilpotency is undecidable. In the case of limit sets and $\mu$-limit sets, the nilpotency problem has the lowest complexity in the arithmetical hierarchy amongst properties of limit or $\mu$-limit sets ($\Sigma_1^0$-complete for limit sets and $\Pi_3^0$-complete for $\mu$-limit sets). It may be the case once more  for generic limit sets. Lemma~\ref{lem:combchar} gives a $\Pi_3^0$ upper bound on the complexity of generic nilpotency and T{\"o}rm{\"a} suggests in \cite{tor-2020} that the exact complexity could be obtained using a construction close to the one presented in \cite{bdpst} or in the present paper. One might think that another version of Rice's theorem could be deduced where the lower bound of complexity on non trivial properties of generic limit sets is higher than $\Sigma_1^0$. 

Using again constructions of \cite{bdpst}, one can certainly prove properties similar to the ones obtained on $\mu$-limit sets in the same paper, but also build examples to show that the languages of $\mu$-limit set and generic limit set can have totally distinct complexities like $\Sigma_3^0$-complete versus a full-shift.

\bibliography{biblio}

\end{document}